\title{Online Self-Indexed Grammar Compression\thanks{This work was supported by JSPS KAKENHI(24700140,26280088) and the JST PRESTO program}}
\author{Yoshimasa Takabatake\inst{1} 
  \and Yasuo Tabei\inst{2}
  \and Hiroshi Sakamoto\inst{1}}
\institute{
  Kyushu Institute of Technology~\email{\{takabatake,hiroshi\}@donald.ai.kyutech.ac.jp}
  \and 
  PRESTO, Japan Science and Technology Agency~\email{tabei.y.aa@m.titech.ac.jp}
}
\newcommand{\logstar}{\lg^*\hspace{-.9mm}}
\begin{document}
\maketitle
\begin{abstract}
Although several grammar-based self-indexes have been proposed thus far, 
their applicability is limited to offline settings where whole input texts are prepared, 
thus requiring to rebuild index structures for given additional inputs, which is often the case 
in the big data era. 
In this paper, we present the first online self-indexed grammar compression named 
OESP-index that can gradually build the index structure by reading input characters one-by-one.
Such a property is another advantage which enables saving a working space for construction, because 
we do not need to store input texts in memory. 
We experimentally test OESP-index on the ability to build index structures 
and search query texts, and we show OESP-index's efficiency, especially space-efficiency for 
building index structures.

\end{abstract}

\section{Introduction}
Text collections including many repetitions, so called highly repetitive texts, have become common.
Version controlled software stores a large amount of documents with small differences. 
The current sequencing technology enables us to read individual genomes quickly and economically, which
generates large databases of thousands of human genomes~\cite{1000genomes}.
The genetic difference between individual human genomes is said to be approximately 0.1 percent, thus making the collection highly repetitive. 
There is therefore a strong need for developing powerful methods to store and process repetitive text collections on a large-scale. 

Self-indexes aim at representing a collection of texts in a compressed format that supports the random access to any position and also provides query searches on the collection. 
Although grammar-based self-indexes are especially effective for processing highly repetitive texts 
and several grammar-based self-indexes have been proposed~\cite{Gagie2012,Gagie14,Claude2010,Claude12,Navarro08} (See Table~\ref{tab:summary}), 
their applicability is limited to offline cases where all the text collections are given in advance, 
thus requiring to rebuild indexes when additional texts are given. 
Evenworse, they need to store whole input texts in memory for constructing indexes, which requires a large amount of working space. 
The problem is especially serious when we process massive collections of highly repetitive texts, which is ubiquitous in the big data era. 
An open challenge is to develop an online self-indexed grammar compression not only with a small working space for a large input 
but also with a functionality of updating data structures for building self-indexes from new additional texts. 

\begin{table}[t] \label{tab:summary}
\begin{center}
{\footnotesize
  \caption{Comparison with offline methods. Construction time, search time and extraction time are presented
    in big $O$ notation that is omitted for space limitations.
    $N$ is the length of text, $m$ is the length of query pattern,
    $n$ is the number of variables in a grammar, $\sigma$ is alphabet size,
    $h$ is the height of the parse tree of the straight line program,
    $z$ is the number of phrases in LZ77, $d$ is the length of nesting in LZ77,
    $occ$ is the number of occurrences of query pattern in a text,
    $occ_q$ is the number of candidate appearances of query patterns,
    $\logstar$ is the iterated logarithm,
    and $\alpha \in (0,1]$ is a load factor for a hash table.
    $\lg$ stands for $\log _2$.}
}
\label{tbl:space}
{\footnotesize
  \begin{tabular}{l|c|c|c}
    &Working space(bits)&Index size(bits)&Algorithm\\ \hline  	
    LZ-index\cite{Navarro04JDA} &$O(N)$&$z\lg N+ 5z\lg N$ & Offline \\
    & & $ - z\lg z + o(N)+O(z)$ & \\ \hline
    Gagie et al.\cite{Gagie2012} & $O(N)$ & $2n\lg n +O(z\lg N$ & Offline \\
    & & $+z\lg z \lg\lg z)$ & \\ \hline
    SLP-index\cite{Claude2010,Claude12} & $O(N)$ & $n\lg N+ O(n\lg n)$ & Offline \\ \hline
    ESP-index\cite{Takabatake2014} & $O(N) $ & $n\lg N + n\lg n$ & Offline \\
    & & $+ 2n + o(n\lg n)$  & \\ \hline \hline
    OESP-index & $n\lg N + O((n+\sigma) \lg (n+\sigma))$ & $n\lg N + O((n+\sigma) \lg (n+\sigma))$ & Online 
\end{tabular}
}
\end{center}
\begin{center}
{\footnotesize
  \begin{tabular}{l|c|c|c}
                                 & Construction time & Search time & Extraction time \\ \hline
    LZ-index \cite{Navarro04JDA} & $N\lg \sigma$ & $m^2 d + (m + occ)\lg z$ & $md$ \\ \hline
    Gagie et al. \cite{Gagie2012} & $N$ & $m^2 + (m+occ)\lg\lg N$ & $ m + \lg \lg N$\\ \hline
    SLP-index \cite{Claude2010,Claude12} & $N$ & $m^2 + h(m+occ)\lg n$ & $(m + h)\lg n$\\ \hline
    ESP-index \cite{Takabatake2014} & $\frac{1}{\alpha}N\logstar N$ expected & 
$\lg\lg n(m+occ_q \lg m \lg N)\logstar N$ & $\lg\lg n(m+ \lg N)$\\ \hline \hline
    OESP-index & $\frac{1}{\alpha}N\lg (n+\sigma )\logstar N$ & 
$\lg (n+\sigma )(\frac{m}{\alpha}+occ_q(\lg N + \lg m\logstar N))$  & $\lg (n+\sigma )(m+ \lg N)$\\
    & expected & expected & \\
  \end{tabular}
}

\end{center}
\end{table}

Edit-sensitive parsing (ESP)~\cite{Cormode07} is an efficient parsing algorithm originally developed for
approximately computing edit distances with moves between texts. ESP builds
from a given text a parse tree that guarantees upper bounds of parsing discrepancies
between different appearances of the same subtext.
Maruyama et al.~\cite{ESP} presented a grammar-based self index called ESP-index on the notion of ESP and 
Takabatake et al.~\cite{Takabatake2014} improved ESP-index for fast query searches by using GMR's rank/select operations for general alphabet~\cite{Golynski06}. 
Unlike other grammar-based self-indexes, they perform top-down searches for finding candidate
appearances of a query text on the data structure by leveraging the upper bounds of parsing discrepancies in ESP. 
However, their applicability is limited to offline cases. 

In this paper, we present an online self-indexed grammar compression named OESP-index for building a self-index by reading input characters one-by-one. 
As far as we know, OESP-index is the first method for building grammar-based self-indexes in an online manner. 
OESP-index is built on the notion of ESP and its data structures are constructed by leveraging the idea behind fully-online LCA (FOLCA)~\cite{Maruyama2013,Maruyama2014}, 
an efficient online grammar compression that builds a context-free grammar (CFG) from an input text and encodes it into a succinct representation. 
We present a novel query search and random access algorithms for OESP-index and discuss their efficiency. 

Experiments were performed on retrieving query texts from a benchmark collection of highly repetitive texts. 
The performance comparison with other algorithms demonstrates OESP-index's superiority.

\section{Preliminaries}
\subsection{Basic notations}
Let $\Sigma$ be a finite alphabet and $\sigma=|\Sigma|$.
The length of string $S$ is denoted by $|S|$.
The set of all strings over et $\Sigma$ is denoted by $\Sigma^*$.
The set of all strings of length $k$ is denoted by $\Sigma^k$.
We assume a recursively enumerable set ${\cal X}$ of variables with $\Sigma \cap {\cal X} = \emptyset$. 
$S[i]$ and $S[i,j]$ denote the $i$-th symbol of string $S$ and the substring from $S[i]$ to $S[j]$, respectively.
$\lg$ stands for $\log_2$.
Let $\lg^{(1)}u = \lg u$, $\lg^{(i+1)}u = \lg \lg^{(i)}u$, and $\logstar u = \min\{i\mid \lg^{(i)}u\leq 1\}$.
Practically $\logstar u=O(1)$ since $\logstar u\leq 5$ for $u\leq 2^{65536}$.

\subsection{Straight-line program (SLP)}
A context-free grammar (CFG) in Chomsky normal form is a quadruple $G=(\Sigma,V,D,X_s)$ where $V$ is a finite subset of ${\cal X}$,
$D$ is a finite subset of $V\times (V\cup \Sigma)^2$ and $X_s \in V$ is the start symbol.
An element in $D$ is called production rule.
A variable in $V$ is called nonterminal symbol.
$val(X_i)$ denotes the string derived from $X_i \in V$.
For $X_1,X_2,...,X_k \in V$, let $val(X_1,X_2,...,X_k)=val(X_1)val(X_2)...val(X_k)$.
A grammar compression of $S$ is a CFG that derives $S$ and only $S$.
The size of a CFG is the number of variables, i.e., $|V|$ and let $n=|V|$.

The parse tree of $G$ is a rooted ordered binary tree such that (i) internal nodes are labeled by
variables in $V$ and (ii) leaves are labeled by symbols in $\Sigma$, i.e., the label sequence in leaves is equal to input string $S$.
In a parse tree, any internal node $Z$ corresponds to a production rule $Z\to XY$ and has
a left child with label $X$ and a right child with label $Y$.
A partial parse tree~\cite{Rytter03} is an ordered tree formed by traversing the parsing tree in a depth-first manner and 
pruning out all descendants under every node of variables appearing no less than twice. 

{\em Straight-line program (SLP)}~\cite{SLP} is defined as a grammar compression over $\Sigma\cup V$
and its production rules are in the form of $X_k\to X_iX_j$ where $X_k,X_i,X_j \in \Sigma\cup V$ and
$1\leq i,j<k\leq n + \sigma$.

\subsection{Phrase dictionary and reverse dictionary}
A phrase dictionary is a data structure for directly accessing a digram $X_iX_j$ from a given $X_k$ if $X_k\to X_iX_j\in D$.
It is typically implemented by an array requiring $2n\log{(n+\sigma)}$ bits for storing $n$ production rules.
A reverse dictionary $D^{-1}$ is a mapping from a digram to an associated variable.
$D^{-1}(XY)$ returns the variable $Z$ if $Z\to XY \in D$; otherwise, 
it creates a new variable $Z' \notin V$ and returns $Z'$. 

\subsection{Succinct data structures}
We use the fully indexable dictionary (FID) for indexing bit strings.
Our method represents CFGs using a rank/select dictionary, a succinct data structure for a bit string $B$~\cite{Jacobson89} 
supporting the following queries: $\mbox{rank}_c(B,i)$ returns the number of occurrences of $c \in \{0,1\}$ in 
$B[0,i]$; $\mbox{select}_c(B,i)$ returns the position of the $i$-th occurrence of $c\in\{0,1\}$ in $B$; 
$\mbox{access}(B,i)$ returns $i$-th bit in $B$.
Data structures with $|B| + o(|B|)$ bit storage to achieve $O(1)$ time rank and select queries~\cite{Raman07} have been presented. 

For online grammar compression, we adopt the dynamic range min/max tree (DRMMT)~\cite{NavSad12} for online construction of parse tree.
We can obtain $parent(B,i)$, the parent of node $i$ of DRMMT $B$ in $O(\frac{\lg n}{\lg\lg n})$ time 
where $n$ is the number of nodes of the tree.
We consider the wavelet tree (WT)~\cite{Grossi03}, an extension of FID for general alphabet.
A WT is a data structure for a string over finite alphabets, and 
it can compute the rank and select queries on a string $S$ over $\Sigma^*$ in $O(\log{\sigma})$ time and using $|S|\log\sigma(1+o(1))$ bits.

\section{Edit Sensitive Parsing (ESP) and Fully-online LCA (FOLCA)}\label{sec:esp}
\begin{figure}[t]
\begin{center}
\includegraphics[width=0.9\textwidth]{./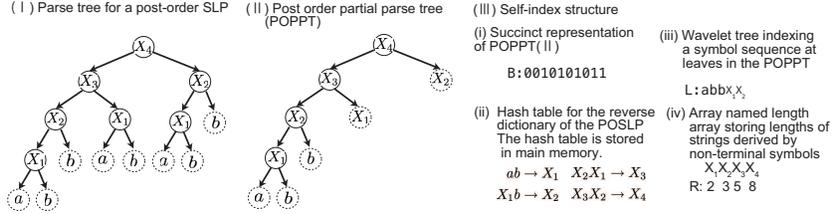}
\end{center}
\vspace{-0.5cm}
\caption{Example of parse tree, post order partial parse tree and self-index structure. The self-index structure consists of four 
data structures which are directly built from the parse tree.}
\label{fig:outline}
\end{figure}

We review the ESP algorithm~\cite{Cormode07} and its online variant named FOLCA~\cite{Maruyama2013} in this section. 
The original ESP is an offline algorithm and builds a parse tree named ESP-tree from a given string. 
ESP-trees are complete, balanced binary trees each subtree of which is 2-tree in the form of $X\to AB$ or 2-2-tree in the form of $X\to AY$ and $Y\to BC$. 
The algorithm partitions a string $S$ into non-overlapping substrings $S_1S_2\cdots S_\ell$ each of which 
belongs to one of three substring types. 
Type1 is a substring of a repeated symbol, i.e., $a^k$ for $a\in \Sigma$ and $k > 1$; 
type2 is a substring longer than $2\lg^*|S|$ not including type1 substrings; 
type3 is a substring that is neither type1 nor type2. 

The parsing algorithm parses each substring $S_i$ according to three substring types. 
For type1 and type3 substrings, the algorithm performs the left aligned parsing as follows. 
If $|S_i|$ is even, the algorithm builds a 2-tree from $S_i[2j-1,2j]$ for each $j \in \{1,2,...,|S_i|/2\}$;
Otherwise, the algorithm builds a 2-tree from $S_i[2j-1,2j]$ for each $j \in \{1,2,...,\lfloor(|S_i|-3)/2\rfloor\}$, and 
it builds a 2-2-tree from the last trigram $S_i[|S_i|-2,|S_i|]$. 
For type2 substrings, the algorithm further partitions substring $S_i$ into short substrings of length two or three by using 
an efficient string partitioning procedure named {\em alphabet reduction}~\cite{Cormode07}, and 
it builds 2-trees for substrings of length two and 2-2-trees for substrings of length three. 
The parsing algorithm generates a new shorter string $S^\prime$ of length from $|S|/3$ to $|S|/2$, and 
it parses $S^\prime$. 
The above process iterates on the new sequence until its length is one. 

FOLCA is an online algorithm that builds ESP-tree as {\em a post-order partial parse tree (POPPT)} using the parsing rule in the ESP algorithm  
from a given string in an online manner. 
A POPPT and the {\em post-order SLP (POSLP)} corresponding to a POPPT are defined as follows. 
\begin{definition}[POPPT and POSLP~\cite{Maruyama2013}]
A POPPT is a partial parse tree whose internal nodes have post-order variables. 
A POSLP is an SLP whose partial parse tree is a POPPT. 
\end{definition}
Figure~\ref{fig:outline}-(I) and -(II) show an example of parse tree and POPPT. 

Since FOLCA builds POPPT using the rules in the ESP algorithm, it can exploit advantages existing in both SLP and ESP-tree. 
Given a string $S$, FOLCA builds the POPPT of height $O(\lg|S|)$ in $O(|S|\lg^*|S|)$ time. 
FOLCA's worst-case approximation ratio to the smallest CFG is $O(\lg^*|S|)$. 
OESP-index directly encodes FOLCA'S POPPT into a succinct representation and build an index structure in an online manner for fast query searches and 
substring extractions, which is explaned in the next section. 

\section{Index structure of OESP-index}

%
\begin{figure}[t]
\begin{center}
\includegraphics[width=0.9\textwidth]{./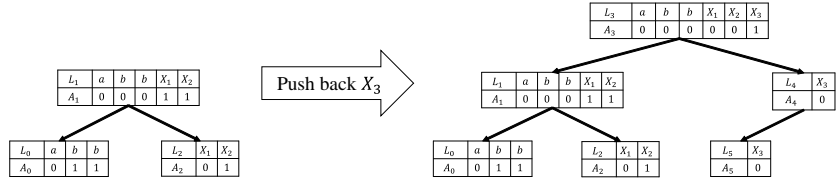}
\end{center}
\vspace{-0.6cm}
\caption{Example of dynamic wavelet tree.
  $L$ is the leaf label of POPPT;
  Code is the integer representation of $L$;
  $B_i$ is the bit vector representing elements in code; Only $A_i$ at each node is stored.}
\label{fig:dwt}
\end{figure}

OESP-index's succinct representation consists of four data structures: 
(i) $B: $ succinct tree of POPPT, (ii) $H: $ hash table (iii) $L: $ non-negative integer array indexed by wavelet tree  
and (iv) $R: $ non-negative integer array.

Succinct tree of POPPT $B$ is a bit string  made by traversing POPPT in post-order, and 
putting '0' if a node is a leaf and '1' otherwise. 
The last bit '1' in $B$ represents a virtual node and $B$ is indexed by the DRMMT~\cite{NavSad12}.
The succinct tree supports the following three operations: 
$parent(B,i)$ return the parent node of a node $i$;
$left\_child(B,i)$ returns the left child of a node $i$; 
$right\_child(B,i)$ returns the right child of a node $i$; 
They are computed in $O(\lg{n}/(\lg\lg{n}))$ time.
The space for our succinct tree is at most $2n+o(n)$ bits. 

A reverse dictionary $H:(V\cup \Sigma)\times (V\cup \Sigma)\to V$ is implemented by a chaining hash table. 
Let $\alpha$ be a constant called a load factor. 
The hash table has $\alpha n$ entries and each entry stores a list of 
integers $i$ representing the left hand side of a rule $X_i\to X_jX_k$. 
The size of the data structure is $\alpha n \lg(n+\sigma)$ bits for the hash table 
and $n\lg{(n+\sigma)}$ bits for the lists. 
Thus, the total size is $n(1+\alpha)\lg{(n+\sigma))}$ bits. 
The access time is expected $O(1/\alpha)$ time. 

Non-negative integer array $L$ stores symbols at leaves from the leftmost leaf to the rightmost leaf in the POPPT. 
$L$ is indexed by {\em dynamic wavelet tree (DWT)} that is presented in the next subsection. 
Each element of non-negative integer array $R$ is the length of the string derived from a variable, i.e., $|val(X_i)|$ for $X_i \in V$. 
The size of $R$ is $n\lg{|S|}$ bits. 
Figure~\ref{fig:outline}-(III)-(iv) shows an example of data structures. 

\subsection{Dynamic wavelet tree (DWT)}
Our DWT is a wavelet tree supporting a operation of adding an element to the tail of a sequence.
Such a operation is called a pushback that is necessary for implementing DWT. 
A wavelet tree for sequence $L$ over range of alphabet and variables $[1..(n+\sigma)]$ can be recursively described over 
sub-range $[a..b] \subseteq [1..(n+\sigma)]$. 
A wavelet tree over range $[a..b]$ is a binary balanced tree with $b-a+1$ leaves. 
If $a=b$, the tree is just a leaf labeled $a$. 
Otherwise it has an internal root node that represents $L$. 
The root has a bitstring $A_{root}[1,|S|]$ defined as follows: 
if $L[i] \leq (a+b)/2$ then $A_{root}[i]=0$, else $A_{root}[i]=1$. 
We define $L_0[1,\ell_0]$ as the subsequence of $L$ formed by the symbols $c\leq (a+b)/2$, and 
$L_1[1,\ell_1]$ as the subsequence of $L$ formed by the symbols $c>(a+b)/2$. 
Then, the left child of the root is a wavelet tree for $L_0[1,\ell_0]$ over range $[a..\lfloor (a+b)/2 \rfloor]$ 
and the right child of the root is a wavelet tree for $L_1[1,\ell_1]$ over range $[1+\lfloor (a+b)/2 \rfloor ..b]$.

Implementing WTs without pointers uses a small space of $n\lg(n+\sigma)+o(n\lg(n+\sigma))$ bits, but 
supporting the pushback operation is difficult. 
Thus, we implement DWTs using pointers where the binary tree is explicitly represented. 
When a new symbol exceeding the representation ability of the current binary tree in DWT is added to DWT, 
DWT adds new nodes to the binary tree, resulting in increasing the height of the tree. 
The space of DWT uses $(3n+2\sigma)\lg(n+\sigma)+o(n\lg(n+\sigma))$ bits.
Figure~\ref{fig:dwt} shows an example of DWT.

%
\subsection{Complexity for building OESP-index}
\begin{theorem}~\label{construction-time}
The size of OESP-index is $n\lg{|S|}+O((n+\sigma)\lg (n+\sigma))$ bits.
The construction time is $O(\frac{1}{\alpha}|S|\lg (n+\sigma)\logstar |S|)$ and the memory consumption is the same as the index size,
where $S$ is an input string, $n$ is the number of variables, $\alpha$ is a load factor of the hash table,
and we assume the size of alphabet is constant.
The update time for the next input symbol is $O(\frac{1}{\alpha}\lg (n+\sigma)\logstar |S|)$.
\end{theorem}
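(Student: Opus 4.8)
The plan is to establish the three claims—the space bound, the per-character update time, and the construction time and working space—by accounting separately for each of the four component structures $B$, $H$, $L$, $R$, and then folding the per-structure costs into the amortized cost of processing a single input character.

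First, for the space bound I would simply sum the sizes already established for the individual structures: the DRMMT-indexed succinct tree $B$ uses $2n+o(n)$ bits, the chaining hash table $H$ uses $n(1+\alpha)\lg(n+\sigma)$ bits, the dynamic wavelet tree indexing $L$ uses $(3n+2\sigma)\lg(n+\sigma)+o(n\lg(n+\sigma))$ bits, and the length array $R$ uses $n\lg|S|$ bits. Only $R$ contributes the leading $n\lg|S|$ term, while every other contribution is $O((n+\sigma)\lg(n+\sigma))$, so the grand total is $n\lg|S|+O((n+\sigma)\lg(n+\sigma))$ bits, as claimed.

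The core of the argument is the per-character update time. I would invoke FOLCA's property that, across all $O(\lg|S|)$ levels of the ESP parse, reading one input character triggers an amortized $O(\logstar|S|)$ elementary parsing operations (digram formation via alphabet reduction and left-aligned parsing), which is exactly what gives FOLCA its $O(|S|\logstar|S|)$ parsing time. Each such operation performs at most one reverse-dictionary query on $H$ at expected cost $O(1/\alpha)$, at most one pushback into the DWT at cost $O(\lg(n+\sigma))$ since that tree has height $O(\lg(n+\sigma))$ over the range $[1..(n+\sigma)]$, one DRMMT node insertion at cost $O(\lg n/\lg\lg n)$, and one append to $R$. Since $1/\alpha\ge 1$ and $\lg(n+\sigma)$ dominates $\lg n/\lg\lg n$, the additive per-operation cost $O(1/\alpha+\lg(n+\sigma))$ is bounded by $O(\frac{1}{\alpha}\lg(n+\sigma))$, yielding an expected amortized per-character update time of $O(\frac{1}{\alpha}\lg(n+\sigma)\logstar|S|)$. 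Summing over the $|S|$ characters gives the construction time $O(\frac{1}{\alpha}|S|\lg(n+\sigma)\logstar|S|)$. For the working space, since FOLCA consumes characters online, the only memory beyond the four output structures is the parsing frontier of $O(\lg|S|)$ symbols per level; under the constant-alphabet assumption this is negligible, so total memory matches the index size.

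I expect the main obstacle to be making the amortization rigorous where two dynamic phenomena interact: the $O(\logstar|S|)$-amortized bound on parsing operations per character, and the occasional restructuring of the DWT when an added variable exceeds the current alphabet range, which raises the tree height and rewrites the affected bitstrings. I would control the restructuring by a doubling argument, so that its total cost over the whole construction is $O(n\lg(n+\sigma))$ and hence dominated by the parsing cost, and I would check that hash-table rehashing and the frontier buffer stay within the stated bounds. Care is also needed to confirm that collapsing the additive per-operation cost into the product $\frac{1}{\alpha}\lg(n+\sigma)$ is a legitimate (if loose) upper bound, and that the ``expected'' qualifier from the hash table propagates correctly to both the update and construction bounds.
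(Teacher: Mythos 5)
Your proposal is correct and follows essentially the same route as the paper's own proof: sum the per-structure space bounds (with $R$ contributing the leading $n\lg|S|$ term), then multiply the per-symbol $O(\logstar|S|)$ parsing iterations by the $O(1/\alpha)$ expected hash-access cost and the $O(\lg(n+\sigma))$ DWT operation cost over $|S|$ symbols. In fact, your treatment is more careful than the paper's terse argument, which silently skips the amortization issues (DWT restructuring, rehashing) that you flag and handle explicitly.
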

\begin{proof}
The size of the length array $R$ is $n\lg{|S|}$ bits for $n$ variables.
The size of $B$ is $2n+o(n)$ bits and the size of $L$ and $H$ are $O((n+\sigma)\lg (n+\sigma))$ bits each.
We can access $Z=H(XY)$ in $O(1/\alpha)$ time for a load factor $\alpha\in (0,1]$.
The alphabet reduction is iterated at most $\lg^*{|S|}$ times for each symbol.
The time to get the parent and left/right children of a node in the partial parse tree
is $O(\lg (n+\sigma))$ using the rank/select over the DWT for $L$.
Thus, the construction time of the parse tree is $O(\frac{1}{\alpha}|S|\lg (n+\sigma)\logstar{|S|})$.
Analogously, the update time is clear.
\end{proof}

\subsection{Query search and substring extraction}

\begin{algorithm}[t]
  {\scriptsize
    \caption{{\sc NextCore} on implicit parse tree POPPT$(B,L)$}
    \label{algo:core}
  }
  {\scriptsize
    \begin{algorithmic}[1]
      \State $v$: the leftmost occurrence node of maximal core, $p$: empty stack 
      \Function{{\sc NextCore}}{$v$, $p$} 
      \If{$v \neq$ root}
      \If{$v$ is the left child of $parent(B,v)$} \Comment 
      \State $p.push(left)$
      \Else
      \State $p.push(right)$
      \EndIf
      \State {\sc NextCore}($parent(B,v)$, $p$)
      \State $i\leftarrow 1$
      \State $p.pop()$
      \While{($u$ = ${\rm select}_v(L,i)$) $\neq$ NULL} \Comment  $(u, p)$: the next occurrence on explicit tree
      \If{$u$ is left child of $parent(B, u)$}
      \State $p.push(left)$
      \Else
      \State $p.push(right)$
      \EndIf
      \State {\sc NextCore}($parent(B,u)$, $p$)
      \State $p.pop()$
      \State $i \leftarrow i+1$
      \EndWhile
      \EndIf
      \EndFunction
    \end{algorithmic}
  }
\end{algorithm}

For a node $v$ of a parse tree of the string $S\in\Sigma^*$, 
and $yield(v_1\cdots v_k)=yield(v_1)\cdots yield(v_k)$.
$Label(v)$ denotes the label of $v$ and $Label(v_1\cdots v_k)=Label(v_1)\cdots Label(v_k)$.
If $Label(v)=X$, $yield(X)$ is identical to $yield(v)$.
$lca(u,v)$ is the lowest common ancestor of $u,v$.
For a pattern $P\in \Sigma^*$, nodes $\{v_1,\ldots,v_k\}$ such that $yield(v_1\cdots v_k)=P$ are called {\em embedding nodes} of $P$. 
For embedding nodes $\{v_1,\ldots,v_k\}$, string $Q=Label(v_1\cdots v_k)$ is called an evidence of pattern $P$.
Since the trivial evidence $Q$ identical to $P$ always exists, the notion of evidence is well-defined.
In addition, for embedding nodes $\{v_1,\ldots,v_k\}$, a node $z$ such that $z=lca(v_1,v_k)$ is called an occurrence node of $P$


The next theorem tells that we can find shorter evidence depending on $|P|$.
\begin{theorem}(~\cite{ESP})
There exists an evidence $Q=Q_1\cdots Q_t$ of $P$ such that 
each $Q_i$ is a maximal repetition or a symbol and $t=O(\lg |P|\logstar |S|)$. 
\end{theorem}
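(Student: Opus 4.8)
The plan is to fix a single occurrence $P=S[a,b]$ in the parse tree of $S$ and to build a frontier of nodes covering $[a,b]$ by walking up the ESP-tree level by level, reading off the labels of the committed nodes as the evidence $Q$. Two quantitative facts drive the bound. First, I would track the image of the occurrence through the sequence of strings $S=S^{(0)},S^{(1)},\dots$ produced by successive ESP rounds; at level $i$ the occurrence maps to an interval $[a_i,b_i]$ of $S^{(i)}$. Since one ESP round shrinks a string to at most half its length, the interior of $[a_i,b_i]$ collapses to a single symbol after $O(\lg|P|)$ levels, and no node above level $O(\lg|P|)$ can lie inside $[a,b]$ (its yield already exceeds $|P|$). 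This supplies the $\lg|P|$ factor: only $O(\lg|P|)$ levels ever contribute nodes to the frontier.

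Second, I would invoke the locality of alphabet reduction: the type of each position and each block boundary inside a type2 region is fixed by a left context of only $O(\logstar|S|)$ symbols. Consequently, at every level the blocking of the interior of $[a_i,b_i]$ is canonical and only an $O(\logstar|S|)$-length region at the left edge, together with the $O(1)$ trailing symbols governed by the odd-length/last-trigram rule, is parsed in a context-dependent way. I would commit exactly these $O(\logstar|S|)$ context-dependent boundary blocks as frontier nodes at that level and let only the canonically-blocked interior rise to level $i+1$.

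The maximal-repetition structure then comes from the type1 rule: a run $a^k$ is pushed up by left-aligned pairing to a run of a single variable at the next level, so any maximal run sitting in the interior at a given level can be recorded as a single piece rather than as many separate nodes. Reading the committed boundary blocks and interior runs left to right along the completed frontier, and grouping maximal stretches of equal labels into single repetitions, yields $Q=Q_1\cdots Q_t$ in which every $Q_i$ is a maximal repetition $X^j$ or a single symbol; summing the $O(\logstar|S|)$ committed pieces over $O(\lg|P|)$ levels gives $t=O(\lg|P|\logstar|S|)$.

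The hard part is the second step. Making rigorous the claim that the interior blocking is canonical up to $O(\logstar|S|)$ blocks per boundary is precisely the ESP parsing-discrepancy property, and the delicate point is to verify that the boundary corrections committed at consecutive levels nest into one consistent frontier that covers exactly $[a,b]$, so that the selected nodes are genuinely embedding nodes of $P$ and their labels concatenate to a well-defined evidence, with no gaps or double counting at the level seams.
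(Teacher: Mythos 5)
The paper itself never proves this theorem---it is imported verbatim from the cited ESP-index work~\cite{ESP}, whose argument in turn rests on the parsing-locality lemma of Cormode and Muthukrishnan~\cite{Cormode07}---so your proposal can only be judged against that cited argument, and it follows essentially the same route: $O(\lg |P|)$ contributing levels because each ESP round at least halves the interval, $O(\logstar |S|)$ context-dependent boundary blocks per level by the locality of alphabet reduction, runs recorded as single pieces, hence $t=O(\lg |P| \logstar |S|)$. Invoking the locality lemma rather than reproving it is legitimate here, since it is the established core property of ESP, and the level-seam bookkeeping you flag as delicate is indeed only bookkeeping: the committed pieces at each level are by construction adjacent to the interval that rises, so the staircase frontier tiles $[a,b]$ with no gaps or overlaps. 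The one point your write-up under-emphasizes is that, for the theorem to serve the search algorithm built on it, the evidence must be the \emph{same} string at every occurrence of $P$, not merely exist per occurrence; your construction does deliver this---every piece you commit is either a character of $P$ or a canonically determined block, so its label is occurrence-independent---but that universality is the reason the theorem is useful and deserves to be stated explicitly as part of the conclusion rather than left implicit in the phrase ``canonical interior.''
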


The time to find the evidence $Q$ of pattern $P$ is bounded by the construction time of 
the parsing tree of $P$.
In our data structure of OESP, the time to find the evidence $Q$ is estimated as follows.

\begin{theorem}~\label{evidence-time}
The time to find $Q$ is $O(\frac{1}{\alpha}|P|\lg (n+\sigma)\logstar{|S|})$.
\end{theorem}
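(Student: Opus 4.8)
The plan is to reduce evidence-finding to the construction of the parse tree of $P$, which is already analyzed in Theorem~\ref{construction-time}, exploiting the reduction stated just above the theorem. Recall that the evidence $Q = Label(v_1 \cdots v_k)$ is read off from the embedding nodes of $P$; to materialize it I would run the ESP parsing procedure on $P$ exactly as in the online construction of OESP-index, except that at every digram or trigram I consult the reverse dictionary $H$ to retrieve the variable already assigned in the grammar of $S$ rather than minting a fresh one. By the edit-sensitive property of ESP, the labels produced on the interior of $P$ coincide with those assigned to any occurrence of $P$ inside $S$, so the resulting top-level labels (each a maximal repetition or a symbol) constitute a valid evidence.

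First I would account for the number of parsing levels. Each application of the parsing rule shortens the current string to a length between $|P|/3$ and $|P|/2$, so the lengths of the successive strings form a geometric series whose sum is $O(|P|)$. Hence the total amount of per-symbol work is proportional to $|P|$ rather than to $|P|\lg|P|$, and it suffices to bound the cost charged to a single symbol and multiply.

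Next I would bound the per-symbol cost by transcribing the three factors appearing in Theorem~\ref{construction-time}. For each processed symbol the alphabet reduction is iterated at most $\logstar|S|$ times; each reverse-dictionary access $H(XY)$ costs expected $O(1/\alpha)$ time; and each access to the children of a variable---needed both to verify a candidate rule returned by $H$ and to navigate the implicit tree---costs $O(\lg(n+\sigma))$ via rank/select on the DWT indexing $L$. Multiplying these factors and summing over the geometric series yields $O(\frac{1}{\alpha}|P|\lg(n+\sigma)\logstar|S|)$, as claimed. The alphabet bound stays $\logstar|S|$ rather than $\logstar|P|$ because the symbols processed range over $[1..(n+\sigma)]$ with $n\leq|S|$, so the number of reduction rounds is governed by the global parameter.

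The main obstacle will be the consistency claim that parsing $P$ in isolation reproduces the labels that $S$ uses on the occurrences of $P$. This is not a running-time issue but the correctness prerequisite that legitimizes identifying evidence-finding with parse-tree construction; it rests on the bounded-discrepancy guarantee of ESP, which confines any mismatch to the boundary and thus to the $O(\lg|P|\logstar|S|)$ evidence pieces furnished by the preceding theorem. Once that guarantee is invoked, the time bound is a direct rewriting of the construction-time analysis with $P$ substituted for the input stream.
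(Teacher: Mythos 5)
Your proposal is correct and follows exactly the paper's approach: the paper reduces finding $Q$ to constructing the parse tree of $P$ with the OESP data structures and then simply invokes Theorem~\ref{construction-time}, whose proof contains precisely the three cost factors ($\logstar|S|$ alphabet-reduction rounds, $O(1/\alpha)$ hash access, $O(\lg(n+\sigma))$ DWT navigation) that you transcribe. The only difference is that the paper's proof is a one-line appeal to that theorem, whereas you spell out the geometric-series summation and the ESP consistency prerequisite explicitly.
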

\begin{proof}
This bound is clear by Theorem~\ref{construction-time}.
\end{proof}

Let us consider the simple case that $|Q_i|=1$ for any $i$.
In this case, $Q$ contains no repetition such that $Q=q_1\cdots q_t\in \Sigma^t$.
A symbol $q_k$ is called a maximal core if $|yield(q_k)|\geq |yield(q_i)|$ for any $i$.
For an internal node $v$ of the parse tree $T$ of $S$ with $Label(v)=q_k$,
an ancestor $z$ of $v$ is the occurrence node of $P$ iff
all $q_1,\ldots,q_{k-1}$ and $q_{k+1},\ldots,q_t$ can be embedded around $v$.
Moreover, any occurrence node of $P$ is restricted by the case $Label(v)=q_k$.
For the general case $Q_i= a^\ell$ $(\ell\geq 2)$, i.e., $Q_i$ is a repetition,
we can reduce the embedding of $a^\ell$ to the embedding of a string $AB\cdots C$ 
of length at most $O(\lg \ell)$ such that $yield(AB\cdots C)=a^\ell$.
Thus, the embedding of type1 string is easier than others, and then,
without loss of generality, we can assume $|Q_i|=1$ for any $i$.

The remaining task of the search problem is the random access to all occurrences of 
the maximal core $q_k$ over the POPPT, the pruned parse tree.
By the definition of POPPT, the internal node with rank $k$ is 
the leftmost occurrence of the symbol $q_k$ itself.
In the previous indexes~\cite{ESP,Takabatake2014}, a next occurrence of $q_k$ is obtained using
a data structure based on the renaming variables in a lexicographic order.
This data structure is not, however, dynamically constructable.
Therefore, we develop the search algorithm {\sc NextCore} (Algorithm~\ref{algo:core}) for the OESP-index.

The {\sc NextCore} visits all occurrences of the maximal core on the parse tree $T$ using its implicit POPPT $T'$.
When {\sc NextCore} receives a candidate node $v$ containing a maximal core $q$ as its descendant,
it computes the pair $(u,p)$ where $u$ is the next occurrence of $v$ in $T'$ and $p$ is the path from $u$ to $q$. 
Thus, $(u,p)$ indicates the occurrence of $q$ in the explicit parse tree.
We show the correctness of this algorithm and its complexity.

\begin{lemma}~\label{next-core}
The {\sc Nextcore} find any occurrence of the maximal core exactly once.
The amortized time to find a next occurrence is $O(\frac{\lg n\lg{|S|}}{\lg\lg n})$.
\end{lemma}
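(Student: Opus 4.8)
The plan is to establish the two assertions of the lemma---that {\sc NextCore} reports every occurrence of the maximal core exactly once, and that its amortized cost per occurrence is $O(\lg n\lg|S|/\lg\lg n)$---by reasoning about the recursion directly from the defining property of a POPPT: each internal node is the \emph{leftmost} occurrence of its variable, and every other occurrence of that variable survives only as a pruned leaf, which is exactly what $\mathrm{select}_v(L,\cdot)$ enumerates.

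First I would pin down the invariant that makes the two recursive moves meaningful. I claim that whenever {\sc NextCore}$(v,p)$ is called, $v$ is the leftmost (internal) occurrence of some variable $X$ and the stack $p$ encodes the downward path from $v$ to the core node $q_k$, so that descending from $v$ along $p$ lands on $q_k$. The push-before-climb / pop-on-return discipline preserves this invariant along the upward recursion, and applying the same discipline to each leaf $u$ found in the while loop extends $p$ by one edge so that the call on $parent(B,u)$ again satisfies it. Crucially, since the subtree under any occurrence of $X$ is an exact copy of the one under its leftmost occurrence, every pair $(u,p)$ emitted is automatically a true occurrence of $q_k$; no separate check of the descent is required, and this is what lets me treat $(u,p)$ as a single reported occurrence.

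For ``exactly once'' I would set up a bijection between occurrences of $q_k$ in the full parse tree $T$ and the pairs $(u,p)$ that the recursion emits. The structural fact I would prove is that each occurrence of $q_k$ has a unique decomposition obtained by climbing from it toward the root and stopping at the \emph{first} pruned (non-leftmost) node $u$ met, recording the canonical leftmost descent $p$ below $u$; the leftmost occurrence $v_0$ is the degenerate empty case. I would then induct on the distance from $X$ to the start symbol in the grammar DAG to show that {\sc NextCore}$(v_0,\emptyset)$ realizes each decomposition once: the lone ``up'' call accounts for occurrences whose first pruned node lies strictly above $X$, whereas the while loop, driven by $\mathrm{select}$, accounts for those whose first pruned node is an occurrence of $X$ itself, and the two families are disjoint precisely because they disagree on that first pruned node. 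The hard part will be this disjointness-and-exhaustiveness step: I must rule out an occurrence being produced both in the upward branch of one call and in a while-loop branch of another, which calls for a careful case analysis of where an occurrence's highest pruned ancestor sits relative to the current variable---this is where a naive argument would silently double-count.

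For the time bound I would decompose the recursion tree into \emph{up-chains}, maximal runs of calls linked by the single upward recursion, each starting either at the root call $v_0$ or at a leaf $u$ discovered by $\mathrm{select}$. Every up-chain has length at most the POPPT height, which is $O(\lg|S|)$, and by the bijection above there are $O(occ)$ chains---one per reported occurrence---so the total number of recursive calls is $O(occ\cdot\lg|S|)$. Each call performs a constant number of $parent$, $left\_child$, and $right\_child$ operations on the DRMMT-indexed tree $B$ at $O(\lg n/\lg\lg n)$ apiece, giving total navigation cost $O(occ\cdot\lg|S|\cdot\lg n/\lg\lg n)$; dividing by $occ$ yields the stated amortized bound. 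The one place needing care is the wavelet-tree side: the $O(occ)$ \emph{successful} $\mathrm{select}$ queries on $L$ cost $O(\lg(n+\sigma))=O(\lg n)$ each under the constant-alphabet assumption, a lower-order contribution, and I would verify separately that the per-call loop-termination tests do not push the total past the DRMMT navigation term, which is the intended leading cost.
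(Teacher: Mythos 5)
Your proposal is correct and takes essentially the same route as the paper's proof: the same canonical decomposition of each non-leftmost occurrence by its unique ancestor surviving as a POPPT leaf (the paper's ``root of the pruned maximal subtree''), the same identical-subtree argument for soundness, an induction establishing that each pair $(u,p)$ is realized exactly once (the paper inducts on the post-order rank of the occurrences rather than on DAG distance, which is the cleaner well-founded order), and the same cost accounting of $O(occ_q\lg|S|)$ recursive calls each doing $O(\lg n/\lg\lg n)$ DRMMT work. The issue you flag about the per-call loop-termination \mbox{select} queries on $L$ is a real loose end, but the paper's own proof glosses over it in exactly the same way.
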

\begin{proof}
Let $T$ be the parse tree and $T'$ be the POPPT $(B,L)$.
By the definition of $T'$, any internal node $x$ of $B$ is the variable itself, i.e., $Label(x)=x$.
For the maximal core $q$, let $v_1>v_2>\cdots >v_k$ be the post-order of its occurrences in $T$.
We show that the algorithm finds any $v_i$ as $(u,p)$ by induction on $i$.
Given $q$, the internal node $q$ of $B$ represents the leftmost occurrence of $q$ itself.
Then, for the base case $i=1$, the occurrence is obtained $v_1$ as $(q,p)$ with $|p|=0$.
Assume the induction hypothesis on some $i$.
Since the node $v_{i+1}$ was pruned in $T'$, let $u$ be the leaf of $T'$ 
corresponding to the root of the pruned maximal subtree containing $v_{i+1}$.
For $Label(u)=u'$, there is the leftmost occurrence of $u'$ as an internal node of $B$.
The subtree on the node $u'$ contains an occurrence of $q$ because
the two subtrees on $u$ and $u'$ in $T$ are identical each other.
Let $p$ be the path from $u'$ to $v'$ for some $v'\in\{v_1,\ldots,v_i\}$.
By the induction hypothesis, the algorithm finds $v'$ as $(u',p)$.
Then, $v_{i+1}$ can be also found as $(u,p)$.
On the other hand, any $(u,p)$ is unique, then the algorithm finds any occurrence of $q$ exactly once.
For the time complexity, the number of executed select operations is bounded by the number of different $(u,p)$
that is $O(occ_q \log{|S|})$ where $occ_q$ is the number of occurrences of $q$.
Each select operation on $L$ and parent operation on $B$ take $O(\lg (n+\sigma))$ and $O(\frac{\lg n}{\lg\lg n})$ time, respectively.
Therefore, the total time is $O(occ_q (\lg (n+\sigma) + \frac{\lg n\lg{|S|}}{\lg\lg n}))=O(occ_q (\frac{\lg n\lg{|S|}}{\lg\lg n}))$ and 
the amortized time to find a next occurrence of $q$ is $O(\frac{\lg n\lg{|S|}}{\lg\lg n})$.
\end{proof} 

\begin{theorem}~\label{search-time}
The counting/locating time of pattern and extraction time are 
$O(\lg (n+\sigma)(\frac{|P|}{\alpha}+occ_q(\lg{|S|}  + \lg{|P|}\logstar{|S|})))$ and 
$O(\lg (n+\sigma)(|P|+\lg |S|))$, respectively, where $P$ is a query pattern and 
$occ_q$ is the number of occurrences of the maximal core of $P$ in the parse tree.
\end{theorem}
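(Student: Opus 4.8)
The plan is to assemble the three claimed time bounds from the machinery already established. The counting/locating time combines two costs: finding the evidence $Q$ of the pattern $P$, and then locating all occurrences of the maximal core via {\sc NextCore}. For the extraction time, the plan is to bound the cost of navigating the POPPT top-down while using the length array $R$ to decide which child to descend into.

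First I would handle the counting/locating bound. By Theorem~\ref{evidence-time}, the time to construct the parse tree of $P$ and thereby find the evidence $Q = Q_1 \cdots Q_t$ is $O(\frac{1}{\alpha}|P|\lg(n+\sigma)\logstar{|S|})$; this accounts for the $\frac{|P|}{\alpha}\lg(n+\sigma)$ term. Then, having identified the maximal core $q_k$ (whose leftmost occurrence is the internal node of $B$ with the appropriate post-order rank), I invoke {\sc NextCore}. By Lemma~\ref{next-core}, {\sc NextCore} visits every one of the $occ_q$ occurrences of the maximal core exactly once, in total time $O(occ_q\,\frac{\lg n\lg|S|}{\lg\lg n})$. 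I would rewrite this as $O(occ_q\lg(n+\sigma)\lg|S|)$, absorbing the $\frac{\lg n}{\lg\lg n}$ factor into $\lg(n+\sigma)$, which gives the $occ_q\lg(n+\sigma)\lg|S|$ term. The remaining $occ_q\lg(n+\sigma)\lg|P|\logstar|S|$ term should come from the verification step: at each candidate occurrence node, one must check that the full evidence $Q$ (of length $t = O(\lg|P|\logstar|S|)$ by the cited theorem) actually embeds around the core, and each of these $t$ checks costs a tree navigation of $O(\lg(n+\sigma))$ time. Multiplying these together and summing over the $occ_q$ candidates yields exactly the third term.

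For the extraction time, I would argue that extracting a length-$|P|$ substring amounts to descending from the root of the POPPT to the relevant leaves, a traversal of depth $O(\lg|S|)$ since FOLCA guarantees a parse tree of height $O(\lg|S|)$. At each step we consult the length array $R$ to choose the correct child (recovering $|val(X_i)|$ for each variable costs $O(1)$), and each parent/child move over the DWT-indexed structure costs $O(\lg(n+\sigma))$. This gives the $\lg(n+\sigma)(|P| + \lg|S|)$ bound, where $|P|$ counts the output symbols emitted and $\lg|S|$ the initial descent.

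The main obstacle, I expect, is pinning down the verification term cleanly rather than the $occ_q$-enumeration itself. Lemma~\ref{next-core} delivers the occurrences of the \emph{maximal core} as bare nodes, but an occurrence node of the full pattern $P$ requires that all the other evidence pieces $q_1,\ldots,q_{k-1}$ and $q_{k+1},\ldots,q_t$ embed around that core, and I must argue carefully that testing each candidate costs only $O(t\lg(n+\sigma)) = O(\lg|P|\logstar|S|\,\lg(n+\sigma))$ and that no candidate is tested more than a constant number of times. The reduction to $|Q_i|=1$ (justified earlier via the $O(\lg\ell)$-length expansion of type1 repetitions $a^\ell$) is what keeps $t$ logarithmic in $|P|$, and I would lean on that reduction so that the per-candidate verification inherits the $\lg|P|\logstar|S|$ factor cleanly.
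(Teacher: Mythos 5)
Your proposal follows essentially the same route as the paper's proof: it decomposes the counting/locating cost into evidence construction (Theorem~\ref{evidence-time}), enumeration of maximal-core occurrences (Lemma~\ref{next-core}), and per-occurrence verification of the length-$O(\lg|P|\logstar|S|)$ evidence, then sums these; the extraction bound likewise uses the balanced height $O(\lg|S|)$ of the parse tree with $O(\lg(n+\sigma))$-time navigation per step, exactly as the paper does. The only (cosmetic) difference is bookkeeping: the paper charges the $\lg(n+\sigma)\lg|S|$ per-occurrence cost to the verification step while you absorb it into the {\sc NextCore} term, but the resulting total is identical.
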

\begin{proof}
Since we can get the length of the substring encoded by any variable in $O(1)$ time,
the locating time is same as the counting time.
Given the pattern $P$, as previously shown, the evidence $Q$ of $P$ is found in $O(\frac{1}{\alpha}|P|\lg (n+\sigma)\logstar{|S|})$ time.
For each occurrence of a maximal core, we can check if the sequence of symbols of length $O(\lg{|P|}\lg^*|S|)$ is embedded 
around the core in $O(\lg (n+\sigma)(\lg{|S|} + \lg |P|\lg^*{|S|}))$ time.
Therefore, by Lemma~\ref{next-core}, the total counting time of pattern is
\begin{eqnarray*}
  &O&(\frac{|P|}{\alpha}\lg (n+\sigma)\logstar{|S|} + occ_q \lg (n+\sigma)(\lg{|S|}+\lg{|P|}\logstar{|S|}) + occ_q \frac{\lg n\lg |S|}{\lg\lg n})\\
&=& O(\lg (n+\sigma)(\frac{|P|}{\alpha}+occ_q(\lg |S| + \lg |P|\logstar |S|))).
\end{eqnarray*}
On the other hand, for any $S[i,j]$ of length $m$,
we can find $S[i]$ in $O(\lg{|S|})$ time and 
visit all leaves in $S[i,j]$ in $O(|P|)$ time because 
the parsing tree is balanced.
This follows the extraction time.
\end{proof}
\begin{figure}
  \begin{minipage}{0.5\hsize}
    \begin{center}
      \includegraphics[width=1.0\textwidth]{./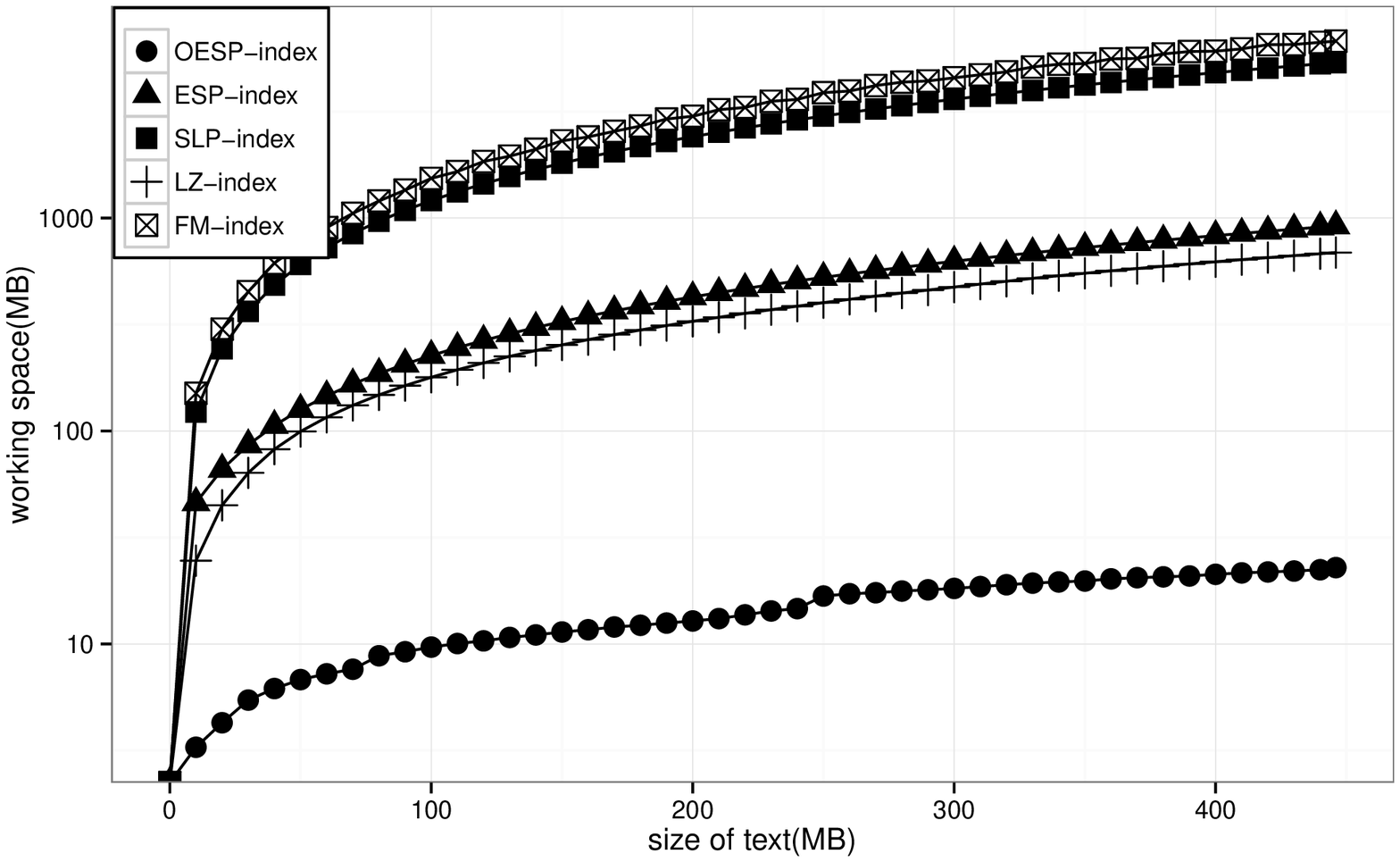}
    \end{center}
  \end{minipage}
  \begin{minipage}{0.5\hsize}
    \begin{center}
      \includegraphics[width=1.0\textwidth]{./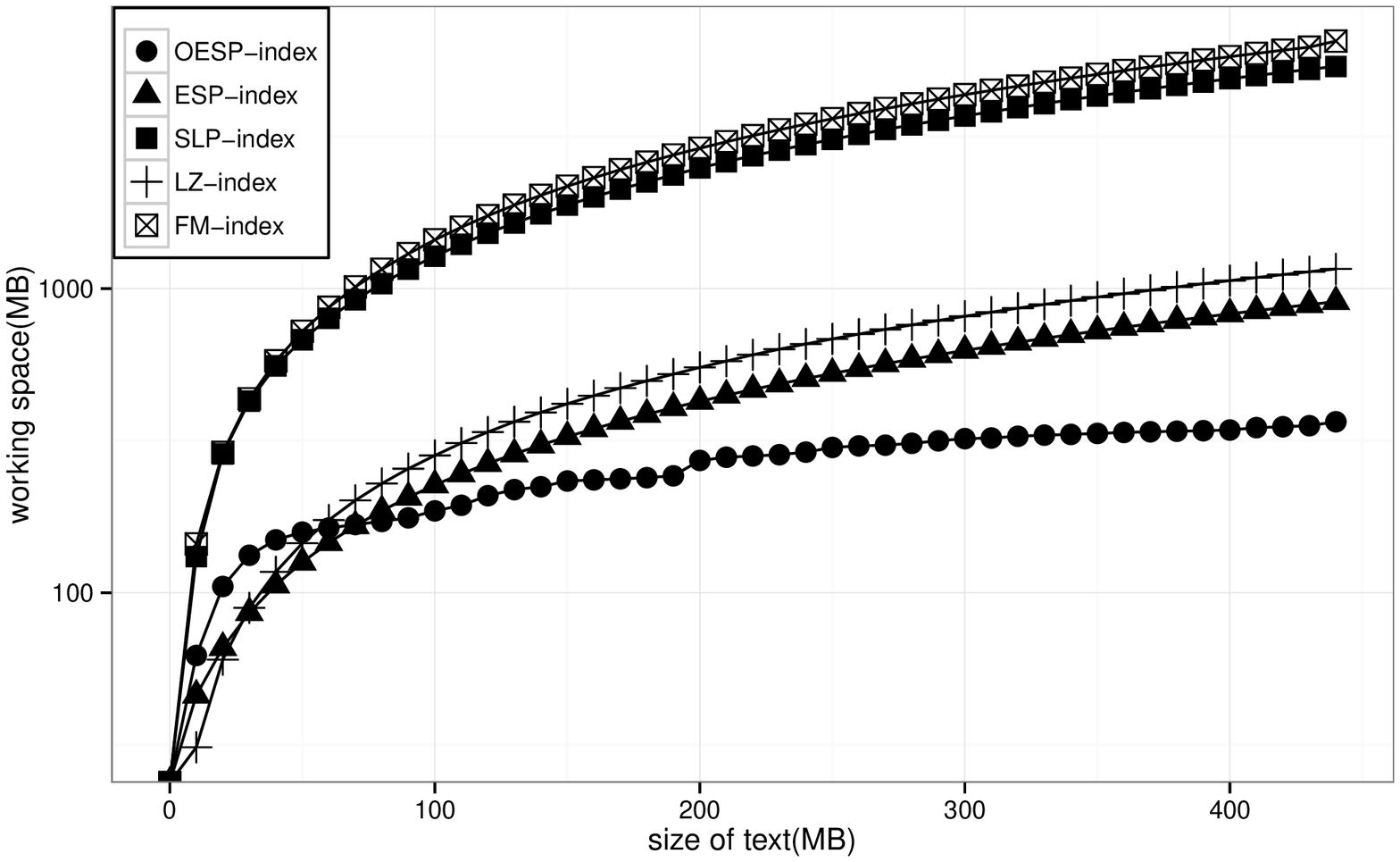}
    \end{center}
  \end{minipage}
  \caption{Working memory of each method in megabytes for einstein(left) and cere(right). }
   \label{fig:work}
  \begin{minipage}{0.5\hsize}
    \begin{center}
      \includegraphics[width=1.0\textwidth]{./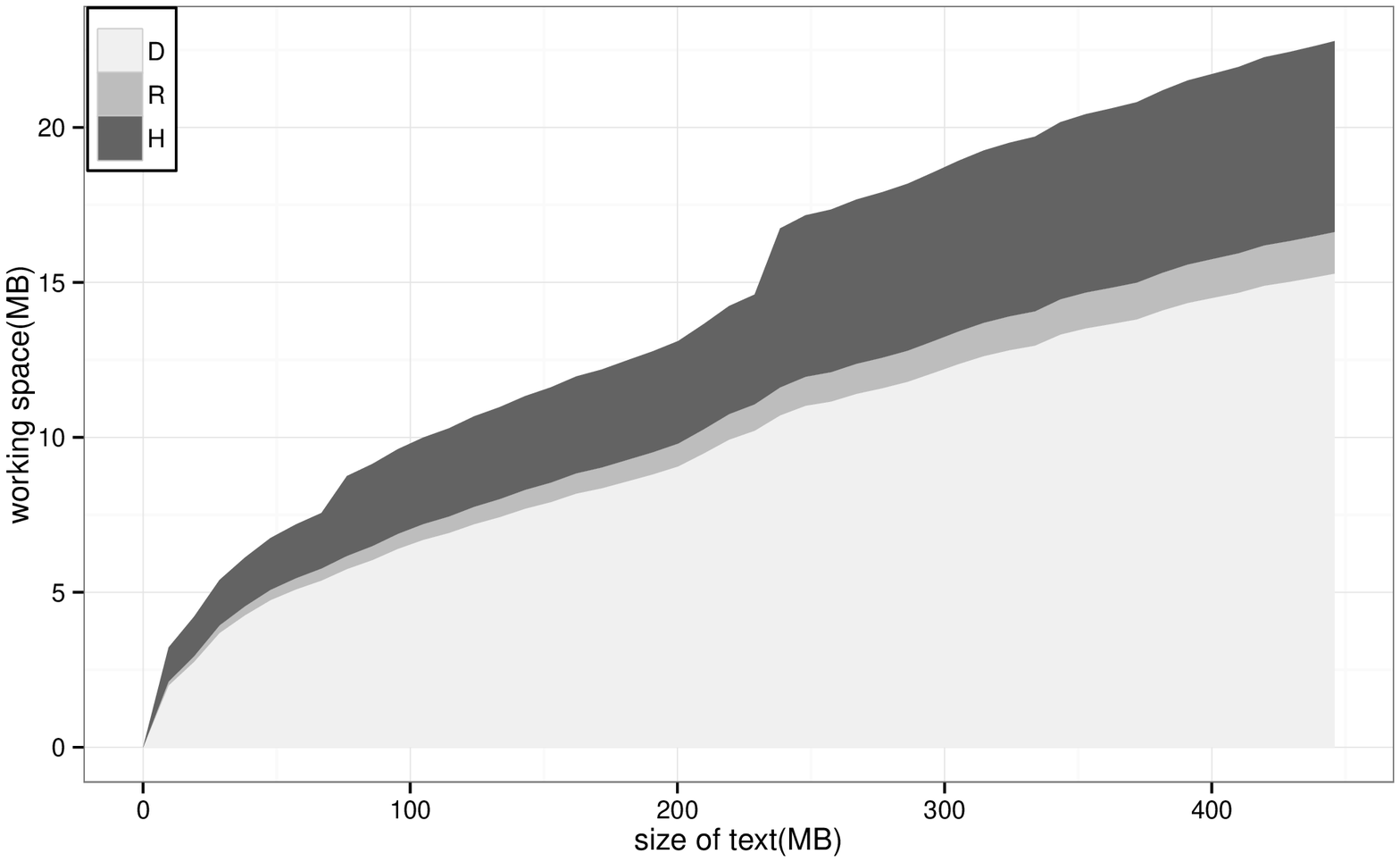}
    \end{center}

  \end{minipage}
  \begin{minipage}{0.5\hsize}
    \begin{center}
      \includegraphics[width=1.0\textwidth]{./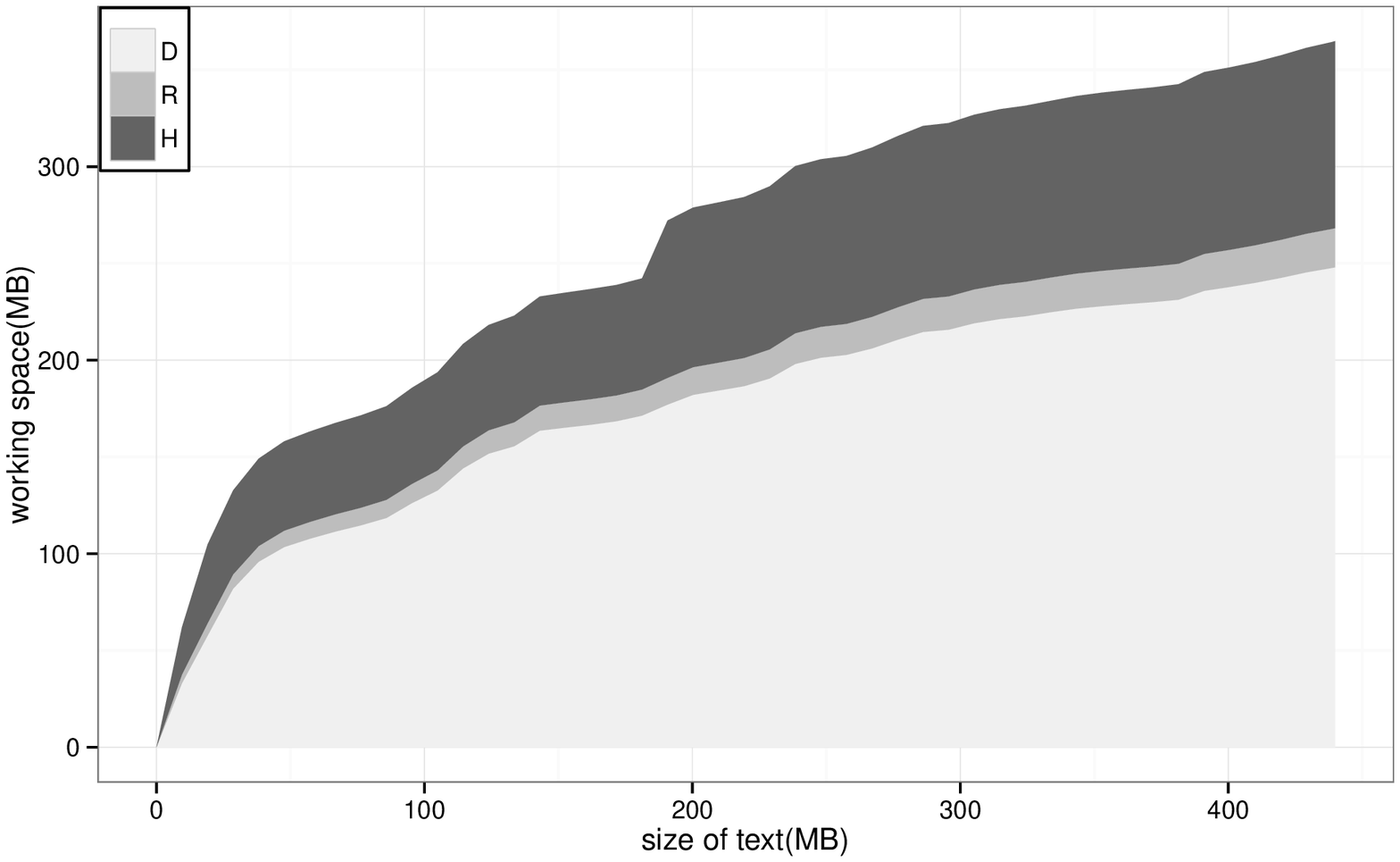}
    \end{center}

  \end{minipage}
  \caption{Working space of dictionary $D$, length array $R$ and hash table $H$ for einstein~(left) and cere~(right).}
   \label{fig:workp}

  \begin{minipage}{0.5\hsize}
    \begin{center}
      \includegraphics[width=1.0\textwidth]{./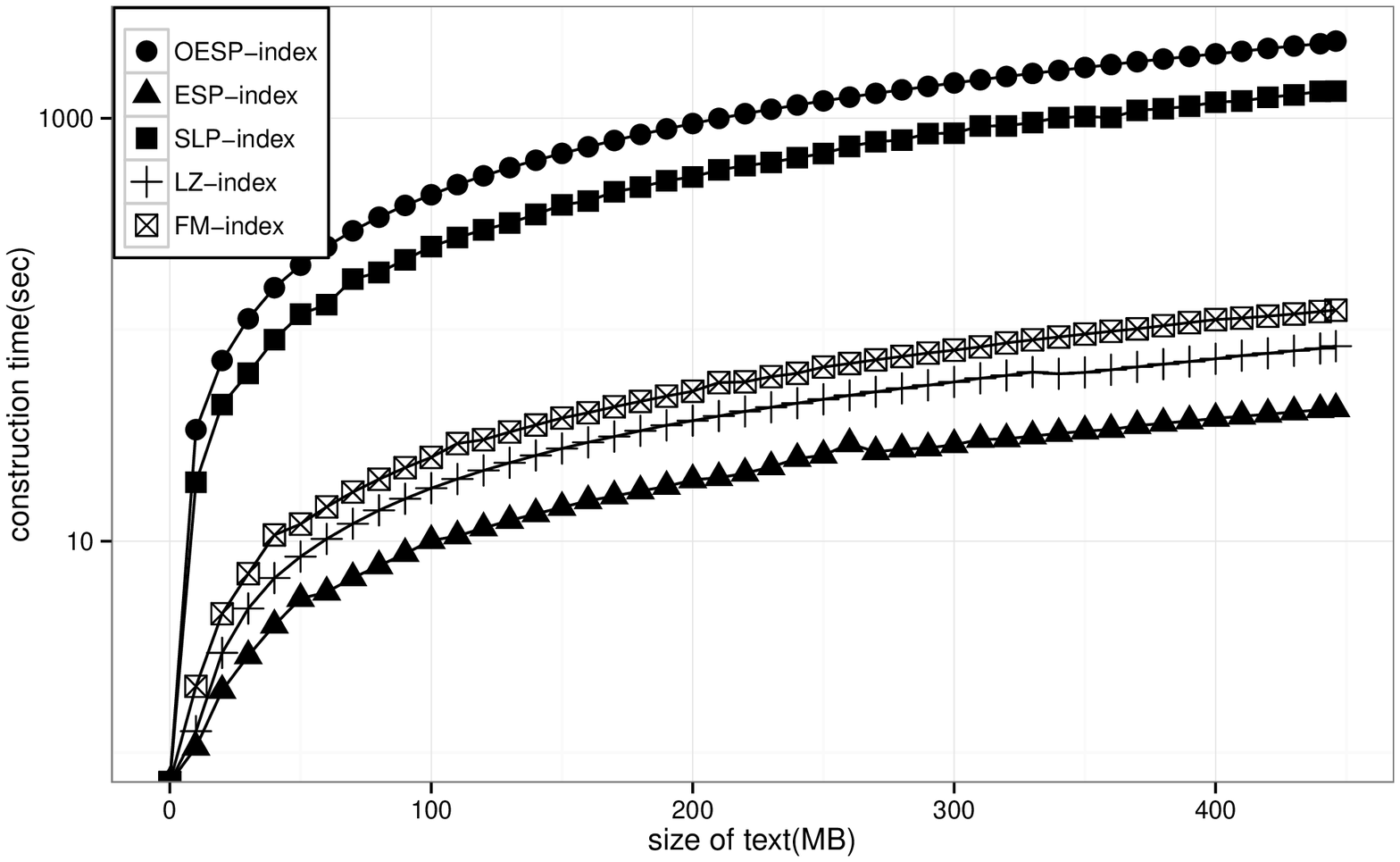}
    \end{center}
  \end{minipage}
  \begin{minipage}{0.5\hsize}
    \begin{center}
      \includegraphics[width=1.0\textwidth]{./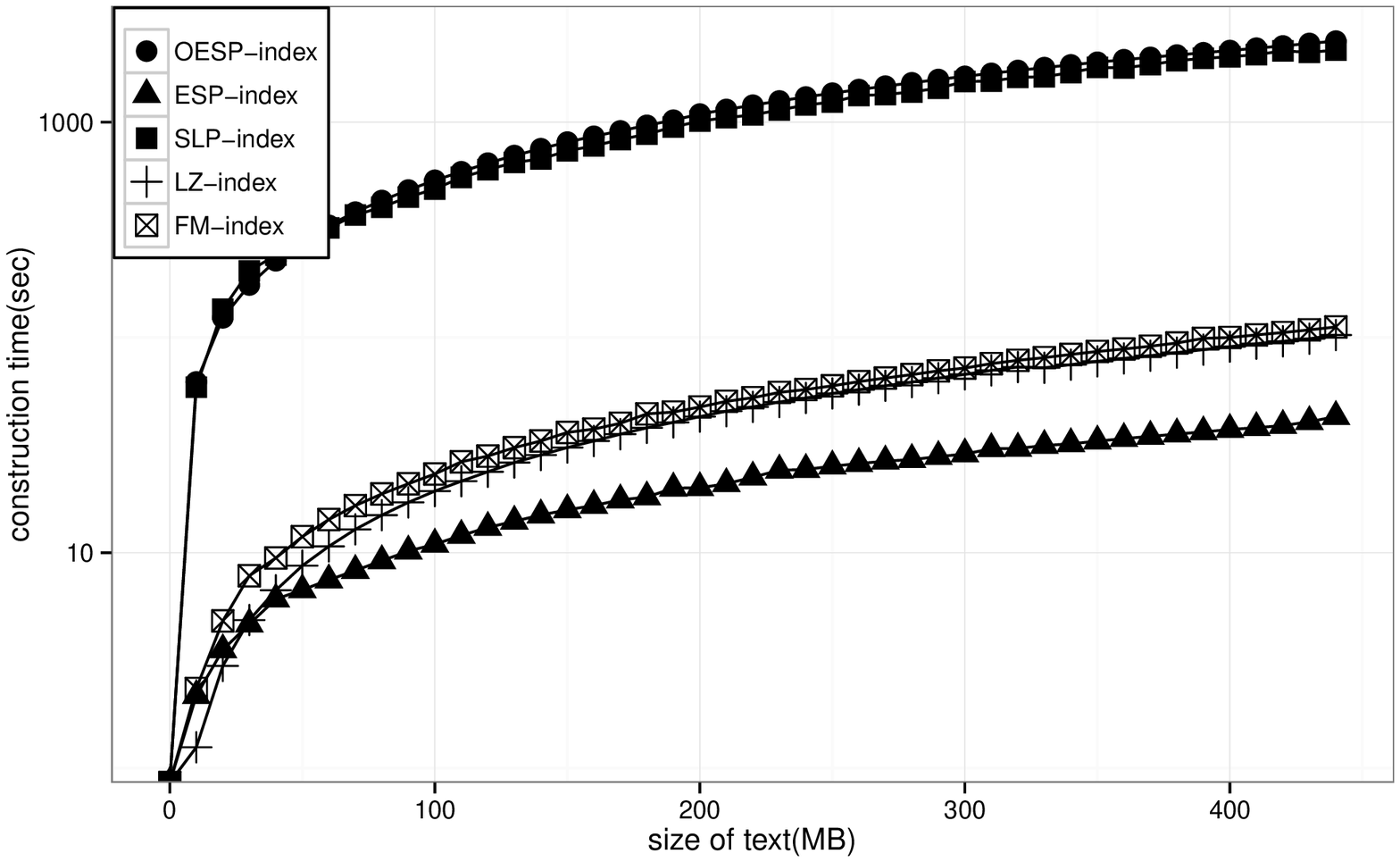}
    \end{center}

  \end{minipage}
  \caption{Construction time of each method in seconds for einstein(left) and cere(right). }
   \label{fig:con}

  \begin{minipage}{0.5\hsize}
    \begin{center}
      \includegraphics[width=1.0\textwidth]{./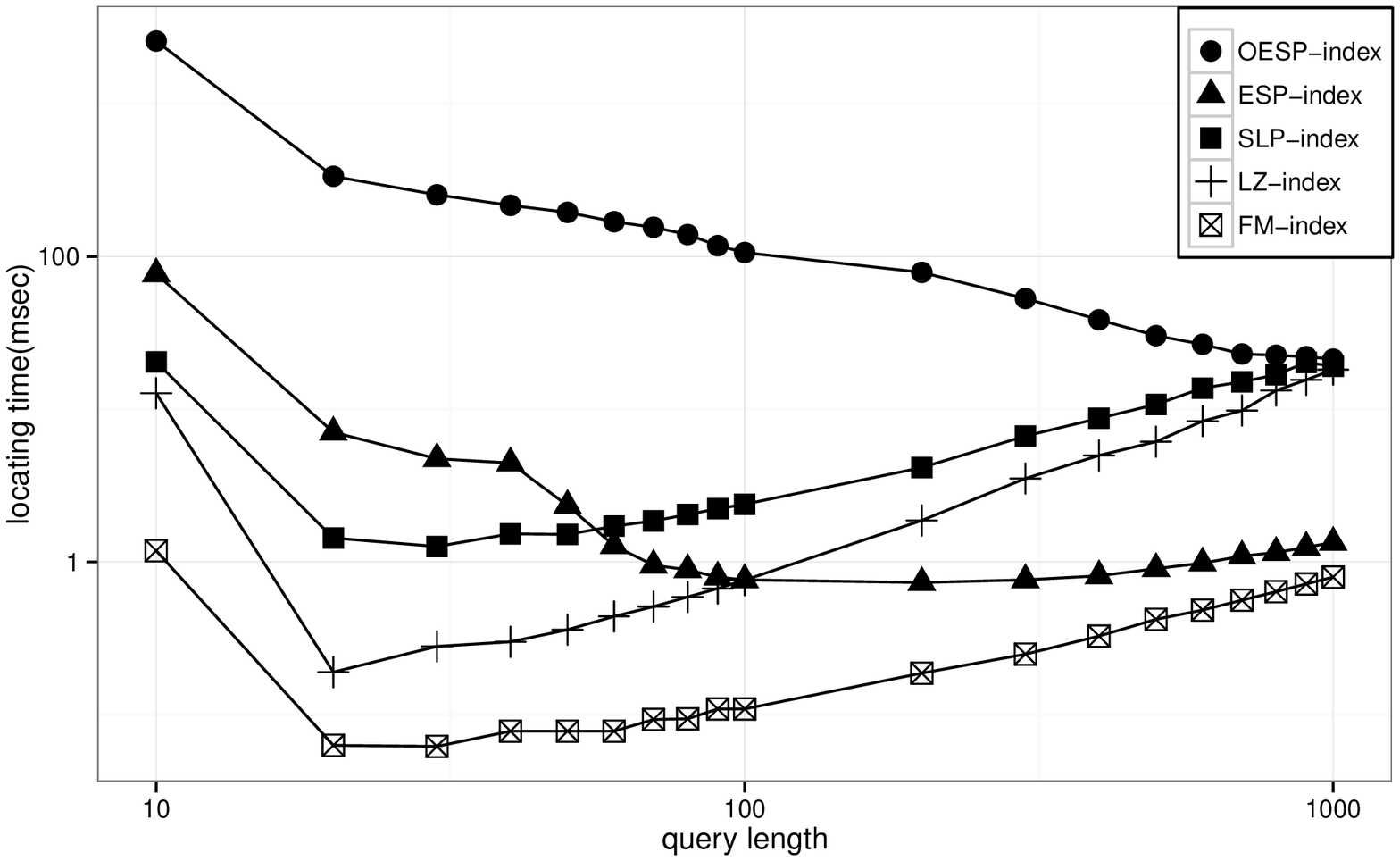}
    \end{center}
  \end{minipage}
  \begin{minipage}{0.5\hsize}
    \begin{center}
      \includegraphics[width=1.0\textwidth]{./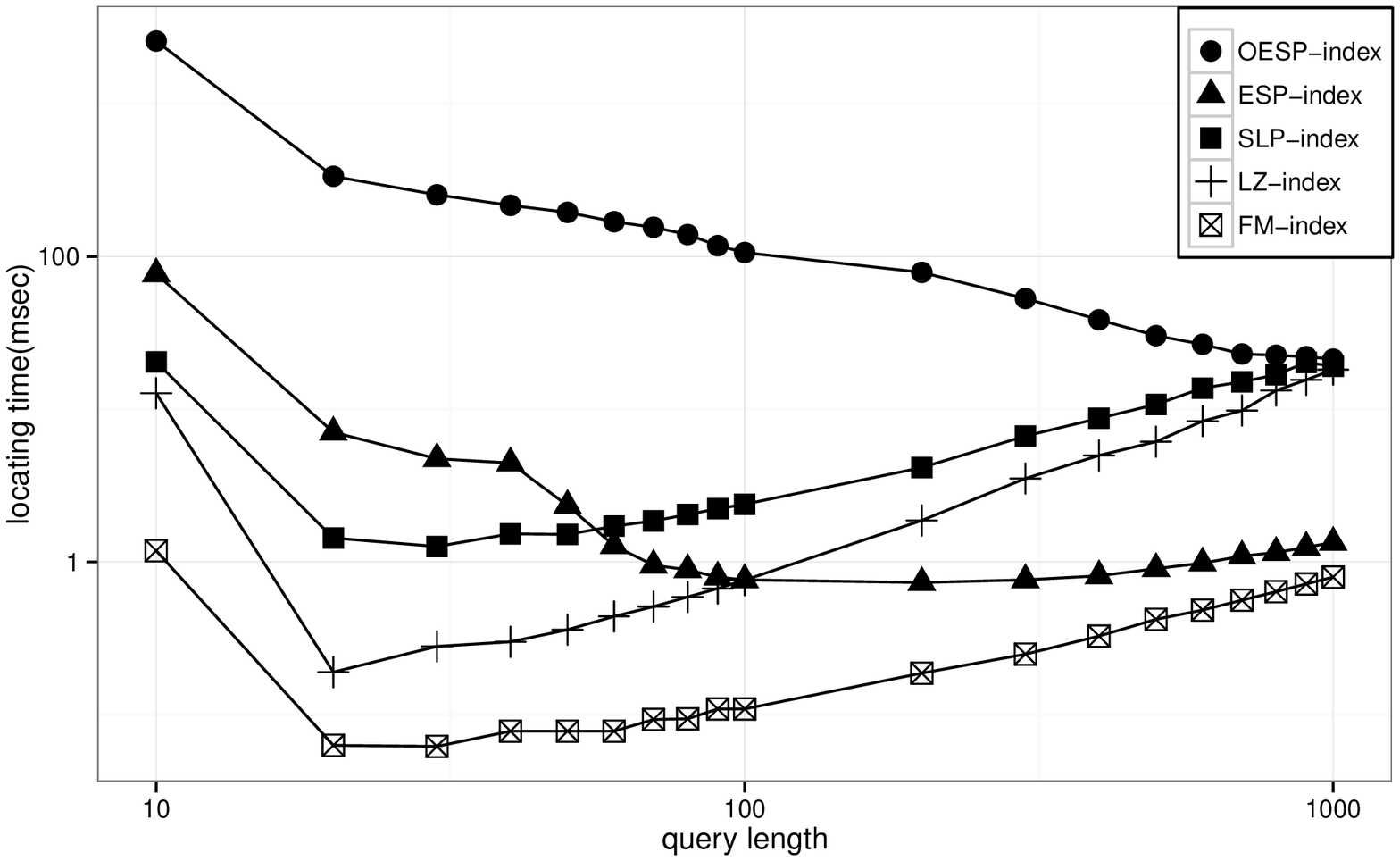}
    \end{center}
  \end{minipage}
  \caption{Locating time of each method in milliseconds for einstein(left) and cere(right). }
  \label{fig:locate}
  \end{figure}

\section{Experiments}
\begin{table}[tb]
  \caption{Index size in mega bytes(MB).}
  \begin{center}
  \begin{tabular}{l|c|c|c|c|c}
    & OESP-index &ESP-index & SLP-index & \
    LZ-index & FM-index \\ \hline 
    einstein  &22.84 & 1.76 & 2.28 & 177.02 & 942.85\\
    cere &364.92 & 27.40 & 45.74 &  438.05 & 806.52\\
  \end{tabular}
  \end{center}
  \label{tab:index}
  \vspace{0cm}
\end{table}

\begin{table}[tb]
  \caption{Working memory of dictionary $D$ consisting of the bit string $B$ and the dynamic wavelet tree $L$ for einstein and cere.}
  \begin{center}
  \begin{tabular}{l|c|c|c|c|c|c|c|c|c|c}
    \multicolumn{2}{l|}{Size of text(MB)} & $50$ & $100$ & $150$ & $200$ & $250$ & $300$ & $350$ & $400$ & all\\ \hline
    einstein & $B$(MB) & $0.04$ & $0.07$ & $0.07$ & $0.07$ & $0.14$ & $0.14$ & $0.14$ & $0.14$ & $0.14$\\ 
             & $L$(MB)& $5.06$ & $6.63$ & $7.84$ & $8.99$ & $10.88$ & $12.23$ & $13.38$ & $14.37$ & $15.14$ \\ \hline
     cere & $B$(MB) & $1.10$ & $1.10$ & $1.10$ &  $2.20$ &  $2.20$ &  $2.20$ & $2.20$ & $2.20$ & $2.20$ \\ 
            & $L$(MB) & $102.34$ &  $131.58$ & $164.05$ & $179.90$  & $199.08$ & $216.87$ & $225.70$ & $235.62$ & $245.81$\\
  \end{tabular}
  \end{center}
  \label{tab:dict}
  \vspace{0cm}
\end{table}

We evaluated the actual performance of OESP-index for real data\footnote{\url{http://pizzachili.dcc.uchile.cl/repcorpus/real/}}.
The environment is Intel(R) Core(TM)i$7$-$2620$M CPU($2.7$GHz) machine with $16$GB memory. 
We use einstein.en.txt (einstein, $446$ MB) and cere (cere, $440$ MB), where einstein is highly repetitive.

Compared self-indexes are offline version of ESP-index (ESP-index)\cite{Takabatake2014},
other grammar-based self-index (SLP-index)\cite{Claude2010,Claude12},
LZ-based index (LZ-index)\footnote{\url{http://pizzachili.dcc.uchile.cl/indexes/LZ-index/LZ-index1}}, and
BWT-based self-index (FM-index)\footnote{\url{https://code.google.com/p/fmindex-plus-plus/}}.
Figure~\ref{fig:work} shows the required working memory (MB) in response to an increase of input string.
For the offline algorithms, the working memory is evaluated for each static data with the indicated size.
Figure~\ref{fig:workp} is the breakdown of required memory by the data structures of OESP-index:
dictionary $D$, length array $R$, and hash table $H$. 
Besides, Table~\ref{tab:dict} is the breakdown of $D$ by the bit string $B$ and the wavelet tree $L$.

Table~\ref{tab:index} shows the size of indexes of all methods.
The size of OESP-index is smaller than LZ-index and FM-index but larger than ESP-index and SLP-index.
The increase of index size arise from DWT. 
Reducing this data size is an important future work.

The memory consumption of OESP-index is smallest for both type of data.
The required memory of OESP-index is 2.5\%(einstein) and 40\%(cere) of offline ESP-index.
The space efficiency of OESP-index comes down when the data is not large and not highly repetitive (Figure \ref{fig:workp} (right)).
Especially, $L$ represented by the dynamic wavelet tree (DWT) consumes a large space (Table \ref{tab:dict})
arising from the pointer and the reservation space of bit string in DWT.


Figure~\ref{fig:con} shows the construction time.
OESP-index is slowest for both data in all methods.
OESP-index is $57.1$ times (einstein) and $58.1$ times (cere) slower than ESP-index
because the original one can use GMR~\cite{Golynski06}, a faster wavelet tree algorithm but not available in the online version.

Figure~\ref{fig:locate} shows the search time.
Here the search time means the locating time since the counting time is almost same to the locating time.
We note that the result of SLP-index is not shown because it could not work for this data.
The range of the length of query pattern is $[10,1000]$.
The locating time of OESP-index is slowest in both data in all query length.
OESP-index is $163.2$ times (cere) and $24.9$ times (einstein) slower than ESP-index.

\section{Conclusion}
We have presented OESP-index, an online self-indexed grammar compression. 
OESP-index is the first method for building grammar-based self-indexes in an online manner. 
Experimental results demonstrated OESP-index's potential for processing a large collection of highly repetitive texts. 
Future work is to make OESP-index scalable to massive collections of the same type, 
which is required in the big data era.

\bibliographystyle{plain}
\bibliography{biblio}

\end{document}